\theoremstyle{plain}
\newtheorem{theorem}{Theorem}
\newtheorem{corollary}{Corollary}
\newtheorem{lemma}{Lemma}
\title{On the Existence of Tree Backbones that Realize the Chromatic Number on a Backbone Coloring}
\author{J.~Araujo and A. A.~Cezar and A.~Silva\\
ParGO Group - Parallellism, Graphs and Optimization\\Departamento de Matem\'atica\\Universidade Federal do Cear\'a, Fortaleza, Brazil
}
\begin{document}

\maketitle

\begin{abstract}
A proper $k$-coloring of a graph $G=(V,E)$ is a function $c: V(G)\to \{1,\ldots,k\}$ such that $c(u)\neq c(v)$, for every $uv\in E(G)$. The chromatic number $\chi(G)$ is the minimum $k$ such that there exists a proper $k$-coloring of $G$.
Given a spanning subgraph $H$ of $G$, a $q$-backbone $k$-coloring of $(G,H)$ is a proper $k$-coloring $c$ of $V(G)$ such that $\lvert c(u)-c(v)\rvert \ge q$, for every edge $uv\in E(H)$. The $q$-backbone chromatic number $BBC_q(G,H)$ is the smallest $k$ for which there exists a $q$-backbone $k$-coloring of $(G,H)$. In this work, we show that every connected graph $G$ has a spanning tree $T$ such that $BBC_q(G,T) = \max\{\chi(G),\left\lceil\frac{\chi(G)}{2}\right\rceil+q\}$, and that this value is the best possible.

As a direct consequence, we get that every connected graph $G$ has a spanning tree $T$ for which $BBC_2(G,T)=\chi(G)$, if $\chi(G)\ge 4$, or $BBC_2(G,T)=\chi(G)+1$, otherwise. Thus, by applying the Four Color Theorem, we have that every connected nonbipartite planar graph $G$ has a spanning tree $T$ such that $BBC_2(G,T)=4$. This settles a question by Wang, Bu, Montassier and Raspaud (2012), and generalizes a number of previous partial results to their question.
\end{abstract}



\section{Introduction}

For basic notions and terminology on Graph Theory, the reader is referred to~\cite{BM08}. All graphs in this work are considered to be simple. Because we investigate the existence of a spanning tree with certain property, we also consider only connected graphs. However, for disconnected graphs, the statements hold by replacing ``spanning tree'' by ``spanning forest''. A \emph{proper $k$-coloring} of a graph $G$ is a function $c:V(G)\rightarrow \{1,\ldots,k\}$ such that $c(u)\neq c(v)$, for every $uv\in E(G)$. If $G$ admits a proper $k$-coloring, we say that $G$ is \emph{$k$-colorable}. The \emph{chromatic number} of $G$, denoted by $\chi(G)$, is the smallest positive integer $k$ such that $G$ is $k$-colorable. Determining the chromatic number of a graph is an NP-hard problem on Karp's list~\cite{Karp72} and one of the most studied problems on Graph Theory~\cite{JT95, MR01}.

Given a spanning subgraph $H$ of $G$, and positive integers $k$ and $q$, a \emph{$q$-backbone $k$-coloring} of $(G,H)$ is a proper $k$-coloring $c$ of $G$ such that $\lvert c(u)-c(v)\rvert\ge q$, for every $uv\in E(H)$. The \emph{$q$-backbone chromatic number of $(G,H)$}, denoted by $BBC_q(G,H)$, is the smallest integer $k$ for which $(G,H)$ admits a $q$-backbone $k$-coloring. 

This parameter was first introduced by Broersma et al. \cite{BFGW03} as a model for the frequency assignment problem where certain channels of communication are more demanding than others. In their seminal work, they only considered $q=2$ and they were interested in finding out how far away from $\chi(G)$ can $BBC_2(G,H)$ be in the worst case. Concerning trees, for each positive integer $k$, they defined:

$$\mathcal{T}_k=\max\{BBC_2(G,T): \mbox{$\chi(G)=k$ and $T$ is a spanning tree of $G$}\}.$$

Note that, if $c$ is a proper $\chi(G)$-coloring of $G$, then by recoloring each vertex $u$ with color $2c(u)-1$, we obtain a proper $(2\chi(G)-1)$-coloring of $G$ where every color is odd. Therefore, we get $BBC_2(G,G)\le 2\chi(G)-1$. This gives an upper bound of $2k-1$ for $\mathcal{T}_k$. In \cite{BFGW03}, they proved that this is actually best possible.

\begin{theorem}[Broersma et al.\cite{BFGW03}]
$\mathcal{T}_k = 2k-1$, for every positive integer $k$.
\end{theorem}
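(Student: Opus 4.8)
The upper bound is already at hand: recoloring a proper $\chi(G)$-coloring by $u\mapsto 2c(u)-1$ yields a $2$-backbone $(2\chi(G)-1)$-coloring of $(G,G)$, hence of $(G,T)$ for every spanning tree $T$ (deleting backbone edges only relaxes constraints), so $\mathcal{T}_k\le 2k-1$. It remains to show $\mathcal{T}_k\ge 2k-1$, i.e.\ to exhibit, for each $k$, a connected graph $G_k$ with $\chi(G_k)=k$ together with a spanning tree $T_k$ such that $BBC_2(G_k,T_k)\ge 2k-1$. The cases $k\le 2$ are immediate ($G_1=K_1$, $G_2=K_2=T_2$).

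My plan for the general case is to force $k$ pairwise distinct colors via an embedded $K_k$ and then, via the spanning tree together with auxiliary ``forcing'' vertices, make those $k$ colors impossible to fit inside $\{1,\dots,2k-2\}$. The naive candidate — the $(k-1)$st power $P_n^{\,k-1}$ of a long path with its Hamiltonian path as backbone — does this for $k\le 3$ (one checks directly that $P_5^{\,2}$ with its spanning path needs $5$ colors), but fails for $k\ge 4$: repeating the cyclic word $1,3,5,\dots,2k-3,2,4,6,\dots,2k-2$ of length $2k-2$ is a valid $2$-backbone $(2k-2)$-coloring of $(P_n^{\,k-1},P_n)$. Hence the construction must also kill every such ``oscillating''/periodic economical coloring. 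I would therefore proceed recursively: assuming $(G_{k-1},T_{k-1})$ realizes $2k-3$ and — crucially — that \emph{every} $2$-backbone coloring of it with $\le 2k-3$ colors uses all of $\{1,\dots,2k-3\}$, form $G_k$ by attaching to a $(k-1)$-clique of $G_{k-1}$ a small gadget: one apex-type vertex together with enough clique edges around it that, in any backbone coloring, it is pinned to a single value at the top of the palette, plus one more tree-neighbor that pushes the palette up by $2$; and keeping $\omega(G_k)=\chi(G_k)=k$ by exhibiting an explicit proper $k$-coloring.

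Verification would be by induction on $k$. The value $\chi(G_k)=k$ is read off from $K_k\subseteq G_k$ and the explicit $k$-coloring. For $BBC_2(G_k,T_k)\ge 2k-1$: given a $2$-backbone $m$-coloring with $m\le 2k-2$, its restriction to the embedded copy of $G_{k-1}$ is a $2$-backbone coloring of $(G_{k-1},T_{k-1})$ with $\le 2k-3$ colors, so by the strengthened hypothesis it uses every color of $\{1,\dots,2k-3\}$; then the gadget vertex, adjacent in $G_k$ to $k-1$ vertices already bearing $k-1$ distinct colors and joined in $T_k$ to a vertex of a prescribed color, is forced to a color $\ge 2k-2$ and its tree-neighbor to a color $\ge 2k-1$, contradicting $m\le 2k-2$; one also checks that the resulting coloring must use all of $\{1,\dots,2k-1\}$, which closes the strengthened induction. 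The main obstacle is precisely this gadget — equivalently the inductive step: the failed first attempts (apex vertices, books $K_{k-1}+\overline{K_t}$, powers of paths, $K_{2k-1}$ minus a near-perfect matching) all leave room because a $2$-backbone coloring may zig-zag in value rather than being pushed monotonically, so $k$ distinct colors for a clique comfortably fit into $k+O(1)$ colors; the construction must be rigid enough that only one extremal value is ever available to the forcing vertex, and the induction must be strengthened to ``every economical backbone coloring uses the whole palette'' so that the gadget has nowhere to hide. Everything else — the chromatic-number computation, the restriction argument, and the already-given upper bound — is then routine bookkeeping.
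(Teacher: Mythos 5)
Your upper bound argument ($\mathcal{T}_k\le 2k-1$ via the recoloring $u\mapsto 2c(u)-1$) is exactly the observation the paper itself makes before citing this theorem; that part is fine. But the content of the theorem is the lower bound, and there you have not given a proof: you describe a strategy (a recursive construction with a ``forcing gadget'' attached to a $(k-1)$-clique) and then explicitly concede that ``the main obstacle is precisely this gadget,'' listing several candidate constructions that you yourself show do not work. A proof of $\mathcal{T}_k\ge 2k-1$ must exhibit a concrete pair $(G_k,T_k)$ with $\chi(G_k)=k$ and $BBC_2(G_k,T_k)\ge 2k-1$ and verify the bound; no such pair is produced here. (The present paper does not prove this statement either --- it attributes the lower bound to Broersma, Fomin, Golovach and Woeginger --- but that does not make your sketch a proof.)

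Beyond the missing construction, the inductive scheme as stated does not type-check. You assume as strengthened hypothesis that every $2$-backbone coloring of $(G_{k-1},T_{k-1})$ ``with $\le 2k-3$ colors'' uses all of $\{1,\dots,2k-3\}$, and then apply it to the restriction of a $2$-backbone $m$-coloring of $(G_k,T_k)$ with $m\le 2k-2$. But that restriction takes values in $\{1,\dots,2k-2\}$ and may well use the color $2k-2$ (or, after shifting, occupy any window of width $2k-2$), so the hypothesis does not apply to it. Any workable strengthening would have to be invariant under translation of the palette (e.g., a statement about the span $\max c-\min c$ and which values in between must occur), and establishing such a statement is essentially as hard as the theorem itself. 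As it stands, the proposal is a research plan with an acknowledged hole at its center, not a proof.
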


This means that, between all the $k$-colorable graphs, there is one that attains this upper bound. However, it does not give any insight on how bad can a tree backbone be for a given graph $G$. One could then define $\mathcal{T}_2(G)$ as the maximum $BBC_2(G,T)$, where $T$ is a spanning tree of $G$.
This worst case behaviour has been studied for planar graphs. If $G$ is planar, because $\chi(G)\le 4$ and the fact that $BBC_2(G,G)\le 2\chi(G)-1$, we get $\mathcal{T}_2(G)\le 7$. Broersma et al. \cite{BFGW07} give examples where $BBC_2(G,T)=6$, and conjecture that $\mathcal{T}_2(G)=6$. A partial result for their conjecture has been given in~\cite{CHSS13}. Note that this parameter can be generalized for higher values of $q$. In~\cite{HKLT14}, Havet et al. prove that, if $G$ is a planar graph, then $\mathcal{T}_q(G)\le q+6$. They also prove that this is best possible if $q\ge 4$, and conjecture that $\mathcal{T}_3(G)\le 8$.

Now, observe that it is not clear whether $G$ always has a spanning tree with a ``good'' behaviour, i.e., such that $BBC_q(G,T)$ is not much larger than $\chi(G)$. Therefore, it makes sense to define the best case behaviour of $BBC_q(G,T)$. In \cite{WBMR12}, Wang, Bu, Montassier and Raspaud asked what is the smallest value $\beta$ for which the following holds: if $G$ is a nonbipartite planar graph with girth at least $\beta$, then $G$ has a spanning tree $T$ such that $BBC_2(G,T) = 4$. Inspired by their question, we define the following parameter, for a given graph $G$ and a positive integer $q$:

$\mathcal{B}_q(G) = \min\{BBC_q(G, T): \mbox{$T$ is a spanning tree of $G$}\}.$

Our main result is the following:

\begin{theorem}\label{thm:main}
For every graph $G$ and positive integer $q$, \[\mathcal{B}_q(G) = \max\{\chi(G),\left\lceil \frac{\chi(G)}{2}\right\rceil +q \}.\]
\end{theorem}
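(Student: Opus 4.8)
I would prove Theorem~\ref{thm:main} by establishing the two inequalities $\mathcal{B}_q(G)\ge M$ and $\mathcal{B}_q(G)\le M$ separately, where $M:=\max\{\chi(G),\lceil\chi(G)/2\rceil+q\}$. The lower bound is the easier half; the matching construction is where the work lies.

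\textbf{Lower bound.} Fix any spanning tree $T$ of $G$ and a $q$-backbone $k$-coloring $c$ of $(G,T)$ with $k=BBC_q(G,T)$; assume $|V(G)|\ge 2$, so $T$ has an edge and hence $k\ge q+1$. Since $c$ is proper, $k\ge\chi(G)$, so it suffices to show $\chi(G)\le 2(k-q)$ (integrality then upgrades $k-q\ge\chi(G)/2$ to $k-q\ge\lceil\chi(G)/2\rceil$). If $k\ge 2q$, recolor each $v$ by $c'(v)=c(v)$ when $c(v)\le k-q$ and $c'(v)=c(v)+(k-2q)$ when $c(v)\ge k-q+1$: each branch is strictly increasing, with images $\{1,\dots,k-q\}$ and $\{2k-3q+1,\dots,2k-2q\}$, which are disjoint exactly because $k\ge 2q$; hence $c'$ is a proper $(2k-2q)$-coloring of $G$. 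If instead $q+1\le k\le 2q-1$, connectivity of $T$ enters: any vertex colored with a ``middle'' color $i\in\{k-q+1,\dots,q\}$ would force each of its $T$-neighbors to have color $\le i-q\le 0$ or $\ge i+q\ge k+1$ — impossible — so it would have no $T$-neighbor, contradicting that $T$ is spanning and connected on $\ge 2$ vertices. Thus $c$ uses only colors in $\{1,\dots,k-q\}\cup\{q+1,\dots,k\}$, a set of $2(k-q)$ colors (disjoint since $k\le 2q-1$), and relabeling yields a proper $2(k-q)$-coloring. Either way $\chi(G)\le 2(k-q)$, so $BBC_q(G,T)\ge M$, and minimizing over $T$ gives $\mathcal{B}_q(G)\ge M$.

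\textbf{Upper bound: the construction.} The target formula suggests the following picture. Take a proper $\chi(G)$-coloring and pair its color classes into $\lceil\chi(G)/2\rceil$ ``layers'', each layer being a bipartite subgraph of $G$ (two classes, or a single one when $\chi(G)$ is odd). Give one side of layer $j$ the color $j$ from the low palette $\{1,\dots,\lceil\chi(G)/2\rceil\}$ and the other side a color from the high palette $\{q+1,\dots,q+\lfloor\chi(G)/2\rfloor\}$; this uses at most $q+\lceil\chi(G)/2\rceil\le M$ colors, and a within-layer edge joins a low color to a high color. The plan is to choose the coloring, the layers, their order, and which side of each layer is ``low'' so that the subgraph $G'$ of $G$ formed by the edges whose endpoints get colors differing by at least $q$ is connected and spanning; one then takes $T$ to be any spanning tree of $G'$. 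When the two palettes overlap (the regime $q<\lceil\chi(G)/2\rceil$, where $M=\chi(G)$) one must keep the coloring proper, which can be arranged by instead using $\chi(G)$ distinct colors with an analogous low/high split. Base cases are direct: for $\chi(G)\le 2$, a bipartite $G$ with parts $P,Q$ is handled by coloring $P$ with $1$ and $Q$ with $q+1$ and taking any spanning tree.

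\textbf{Main obstacle.} The difficulty is precisely that the naive layer assignment need not make $G'$ spanning: a vertex of a ``low'' layer may have all its neighbors in layers whose colors are too close, leaving it isolated in $G'$, and there are connected graphs for which no global two-block recoloring of a fixed $\chi(G)$-coloring avoids this. The remedy must refine the coloring vertex by vertex — splitting color classes as needed — which I expect to carry out by induction on $|V(G)|$. Peeling a degree-$1$ vertex is painless: by the structural fact from the lower bound, when $M<2q$ the inductive coloring avoids all middle colors, so the leaf can always be given a color at least $q$ away from its neighbor. Peeling a non-cut-vertex whose removal decreases the chromatic number is also manageable, since a color is then freed up and $M\ge 2q$ holds in that situation. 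The genuinely hard case is a graph with $\delta(G)\ge 2$ in which every non-cut-vertex keeps the chromatic number unchanged and may have large degree: then no color is freed and the inductive coloring can already exhaust the palette on the removed vertex's neighborhood. Resolving this — via a Kempe-type recoloring that liberates a color in that neighborhood without violating the backbone condition on the already-built tree, or by strengthening the induction hypothesis to produce a coloring with a controllably ``movable'' color class, or by passing to a $\chi(G)$-critical subgraph and coloring it economically — is the technical heart of the proof.
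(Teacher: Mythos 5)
Your lower bound is correct and is essentially the paper's argument: the decisive observation in both is that when $k<2q$ no vertex can receive a ``middle'' color $i\in\{k-q+1,\dots,q\}$ because its backbone neighbor would need a color $\le 0$ or $\ge k+1$, so at most $2(k-q)$ colors are available and hence $\chi(G)\le 2(k-q)$. (The paper states this for any spanning backbone of minimum degree one, and in your case $k\ge 2q$ the recoloring is superfluous, since then $2(k-q)\ge k\ge\chi(G)$ outright.)

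The upper bound, however, is not proved: you describe the low/high layering, correctly identify that the resulting set of $q$-far edges need not induce a connected spanning subgraph, and then explicitly leave the hard case open, listing three candidate strategies (Kempe recoloring, a strengthened induction hypothesis, passing to a critical subgraph) without carrying any of them out. That unresolved case is precisely the content of the theorem. The paper's mechanism is different from your vertex-peeling induction and closes the gap in one stroke: set $k=\max\{\chi(G),\lceil\chi(G)/2\rceil+q\}$, restrict attention to proper $k$-colorings that use only the split palette $\{1,\dots,x\}\cup\{x+k'+1,\dots,k\}$ with $x=\lceil\chi(G)/2\rceil$ and $k'=k-\chi(G)$, and among these choose $c$ maximizing the order of a largest component $H$ of the $q$-subgraph $G_{c,q}$. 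If $H$ is not spanning, pick an edge $uv\in[H,\overline{H}]$ with $v\notin V(H)$. If some color $j$ lies outside $[c(u)]_q\cup[c(v)]_q$, the Kempe chain on colors $\{j,c(v)\}$ through $v$ must be disjoint from $V(H)$ (otherwise an edge of the chain leaving $H$ would already be a $q$-far edge, contradicting that its endpoint outside $H$ is outside $H$), so swapping on it makes $uv$ a $q$-far edge and enlarges $H$, contradicting the extremal choice of $c$. Otherwise $[c(u)]_q\cup[c(v)]_q=\{1,\dots,k\}$ for every such edge; a short computation shows the split palette guarantees $k\notin[i]_q$ for every low color $i$ and $1\notin[i]_q$ for every high color $i$, so (say $c(u)\le x$) one gets $k\in[c(v)]_q$, forcing $c(v)$ to be high, and the Kempe chain on $\{c(v),k\}$ through $v$ is again disjoint from $H$ (an edge of it crossing out of $H$ would have both endpoints high-colored, so color $1$ would lie outside both $[\,\cdot\,]_q$ sets, contradicting the covering property); swapping gives $v$ color $k$ and again enlarges $H$. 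Without this extremal-plus-Kempe argument (or a completed alternative), your proposal establishes only the inequality $\mathcal{B}_q(G)\ge\max\{\chi(G),\lceil\chi(G)/2\rceil+q\}$.
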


This gives us the following value for bipartite graphs:

\begin{corollary}
If $G$ is bipartite, then $\mathcal{B}_q(G)=q+1$.
\end{corollary}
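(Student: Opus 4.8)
The plan is to derive the corollary directly from Theorem~\ref{thm:main}, so the real work is just evaluating the expression $\max\{\chi(G),\lceil \chi(G)/2\rceil + q\}$ in the bipartite case. A connected bipartite graph with at least one edge has $\chi(G)=2$ (the edgeless case $G=K_1$ being the kind of degeneracy that is implicitly set aside throughout the paper), so I would substitute $\chi(G)=2$ into the formula, obtaining $\lceil \chi(G)/2\rceil + q = 1+q$.

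Next I would note that, since $q\ge 1$, we have $1+q\ge 2=\chi(G)$, so the maximum in Theorem~\ref{thm:main} is attained by the second term. Hence $\mathcal{B}_q(G)=\max\{2,\,q+1\}=q+1$, which is exactly the claimed value.

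For completeness — and because it does not rely on the (considerably harder) proof of Theorem~\ref{thm:main} — I would also record the elementary self-contained argument. For the upper bound, fix a bipartition $(A,B)$ of $G$, color every vertex of $A$ with $1$ and every vertex of $B$ with $q+1$; since every edge of $G$ (in particular every edge of any spanning tree $T$) joins $A$ to $B$, this is a proper $(q+1)$-coloring with $|c(u)-c(v)|=q$ on all of $E(G)$, so $BBC_q(G,T)\le q+1$ for every spanning tree $T$, whence $\mathcal{B}_q(G)\le q+1$. For the matching lower bound, any spanning tree $T$ of a connected graph on at least two vertices has an edge $uv$, and a $q$-backbone coloring must satisfy $|c(u)-c(v)|\ge q$, hence uses a color $\ge q+1$; thus $BBC_q(G,T)\ge q+1$ for every $T$ and $\mathcal{B}_q(G)\ge q+1$.

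There is no genuine obstacle here: the only points needing a moment's care are the boundary behaviour in $q$ (checking that the second term of the maximum dominates, which requires only $q\ge 1$) and the degenerate edgeless/single-vertex case, which is conventionally excluded.
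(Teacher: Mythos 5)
Your derivation is correct and matches the paper's (implicit) proof exactly: the corollary is stated as a direct consequence of Theorem~\ref{thm:main}, obtained by substituting $\chi(G)=2$ and noting that $q+1\ge 2$ for $q\ge 1$. The additional self-contained two-coloring argument you record is also valid and is essentially the standard observation the paper makes elsewhere (that $BBC_2(G,T)=3$ iff $G$ is bipartite), so nothing further is needed.
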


Considering $q\ge 2$, observe that if $G$ has at least one edge and $T$ is a spanning tree of $G$, then $BBC_2(G,T)\geq 3$, and that $BBC_2(G,T) = 3$ if, and only if, $G$ is bipartite. Also, observe that, when $G$ is a nonbipartite planar graph, we get that $\max\{\chi(G),\lceil \chi(G)/2\rceil + 2\}$ is always equal to 4. Therefore, the answer to Wang et al's question is $\beta = 3$, i.e., having high girth is not a necessary condition for having the desired spanning tree. 

\begin{corollary}
If $G$ is a nonbipartite planar graph and $q\ge 2$, then $\mathcal{B}_q(G)=q+2$. In particular, $G$ always has a spanning tree $T$ for which $BBC_2(G,T)=4$.
\end{corollary}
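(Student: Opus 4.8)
The plan is to derive this immediately from Theorem~\ref{thm:main} together with the Four Color Theorem; essentially all the work sits in Theorem~\ref{thm:main}, and what remains is a short computation.

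First I would localize the chromatic number. Since $G$ is planar, the Four Color Theorem gives $\chi(G)\le 4$; since $G$ is nonbipartite it is not $2$-colorable, so $\chi(G)\ge 3$. Hence $\chi(G)\in\{3,4\}$. I then substitute each value into the identity $\mathcal{B}_q(G)=\max\{\chi(G),\lceil\chi(G)/2\rceil+q\}$ of Theorem~\ref{thm:main}, using the hypothesis $q\ge 2$ (so that $\lceil\chi(G)/2\rceil+q=2+q\ge 4$): if $\chi(G)=3$ then $\mathcal{B}_q(G)=\max\{3,2+q\}=q+2$, and if $\chi(G)=4$ then $\mathcal{B}_q(G)=\max\{4,2+q\}=q+2$. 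In either case $\mathcal{B}_q(G)=q+2$, which is the first assertion.

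For the ``in particular'' clause I specialize to $q=2$, obtaining $\mathcal{B}_2(G)=4$; and since Theorem~\ref{thm:main} guarantees that the value $\mathcal{B}_q(G)$ is attained by some spanning tree of $G$, there is a spanning tree $T$ with $BBC_2(G,T)=4$. I do not anticipate any genuine obstacle here: the only nontrivial input is the Four Color Theorem, used solely for the bound $\chi(G)\le 4$, while both hypotheses enter only elementarily — nonbipartiteness to force $\chi(G)\ge 3$, and $q\ge 2$ to force $q+2\ge 4\ge\chi(G)$, which makes the $\lceil\chi(G)/2\rceil+q$ term the one that attains the maximum. Both hypotheses are necessary for this clean value: with $q=1$ and $\chi(G)=4$ one gets $\max\{4,3\}=4\neq q+2$, and if $G$ were bipartite then Theorem~\ref{thm:main} would give $\mathcal{B}_q(G)=q+1$ instead.
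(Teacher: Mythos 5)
Your proposal is correct and follows exactly the route the paper intends: the Four Color Theorem gives $\chi(G)\le 4$, nonbipartiteness gives $\chi(G)\ge 3$, so $\lceil\chi(G)/2\rceil=2$ and $q\ge 2$ makes $\max\{\chi(G),2+q\}=q+2$, after which Theorem~\ref{thm:main} and the definition of $\mathcal{B}_q$ as a minimum over spanning trees finish the job. No issues.
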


We mention that this generalizes results in a number of papers:~\cite{BB15,BL11,BZ10,BZ11,WBMR12}. We also mention that, in \cite{WBMR12}, Wang et al. wrongly state that $\beta$ is at least 4 due to the existence of a nonbipartite planar  graph $G$ and a spanning tree $T$ of $G$ such that $BBC_2(G,T)=6$. However, they fail to notice that, in order for $\beta$ to be at least 4, this should hold for every spanning tree of $G$.


\section{Proof of Theorem~\ref{thm:main}}

Roughly, the idea of the proof is to show that any graph $G$ has a \emph{nice} proper $k$-coloring, where $k = \max\{\chi(G),\left\lceil \chi(G)/2\right\rceil + q\}$. By nice we mean that the subgraph of $G$ induced by the edges whose endpoint colors differ by at least $q$ form a \emph{connected spanning} subgraph of $G$. Then, we select among these edges a spanning tree to form its backbone. Before presenting the main result, let us recall some definitions, and present some new ones.

Consider a proper $k$-coloring $c$ of a graph $G$. For $i\in \{1,\ldots,k\}$, the \emph{color class $i$ of $c$} is the subset $c_i=\{u\in V(G): c(u)=i\}$. Observe that if $H$ is a component of $G[c_i\cup c_j]$, a.k.a. Kempe's chain, then the $k$-coloring $c'$ obtained from $c$ by switching colors $i$ and $j$ in $V(H)$ is also a proper $k$-coloring of $G$. We denote the set of edges $\{uv\in E(G): u\in V(H)\mbox{ and }v\in V(G)\setminus V(H)\}$ by $[H,\overline{H}]$. Given an integer $q$, and $i\in \{1,\ldots,k\}$, we denote by $[i]_q$ the set $\{j\in \{1,\ldots,k\}: \lvert i-j\rvert <q\}$. The \emph{$q$-subgraph of $c$}, denoted by $G_{c,q}$, is the subgraph $(V(G),E_{c,q})$, where $E_{c,q} = \{uv\in E(G): \lvert c(u)-c(v)\rvert\ge q\}$. Alternatively, one can see that $uv\in E_{c,q}$ if and only if $c(u)\notin [c(v)]_q$ if and only if $c(v)\notin [c(u)]_q$. Our upper bound is obtained as a corollary of the following theorem:


\begin{theorem}
If $G$ is a connected graph and $k\ge \max\{\chi(G), \lceil \chi(G)/2\rceil + q\}$, then there exists a proper $k$-coloring $c$ of $G$ such that $G_{c,q}$ is connected.
\end{theorem}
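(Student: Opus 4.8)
The plan is to prove the slightly more flexible statement that, for every $k\ge\max\{\chi(G),\lceil\chi(G)/2\rceil+q\}$, \emph{among all} proper $k$-colorings of $G$ there is one whose $q$-subgraph is connected; the tree promised by Theorem~\ref{thm:main} is then any spanning tree of $G_{c,q}$. The case $q=1$ needs no argument, since $G_{c,q}=G$ for every proper coloring $c$, so I assume $q\ge 2$ throughout (in particular $k\ge\lceil\chi(G)/2\rceil+q\ge 2$).

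First I would record two elementary facts about a proper $\chi$-coloring $c_0$ with classes $V_1,\dots,V_\chi$, where $\chi=\chi(G)$. By minimality of $\chi$, every two classes $V_i,V_j$ are joined by at least one edge of $G$ (otherwise merging them gives a proper $(\chi-1)$-coloring); consequently, for any injection $\phi:\{1,\dots,\chi\}\to\{1,\dots,k\}$, the coloring $c(v)=\phi(i)$ for $v\in V_i$ is proper and its $q$-subgraph contains $G[V_i\cup V_j]$ for every pair $i,j$ with $\lvert\phi(i)-\phi(j)\rvert\ge q$. Using the hypothesis on $k$ one can choose $\phi$ so that the ``pattern graph'' on $\{1,\dots,\chi\}$ (with $i$ adjacent to $j$ exactly when $\lvert\phi(i)-\phi(j)\rvert\ge q$) is connected: when $q\ge\lceil\chi/2\rceil$, pair the classes up and send pair $j$ to the colors $\{j,\,j+q\}$ (these colors are pairwise distinct since $q+1>\lceil\chi/2\rceil$, and the largest one is $\lceil\chi/2\rceil+q\le k$); when $q<\lceil\chi/2\rceil$, simply use the colors $1,\dots,\chi$, and here $q\le\lceil\chi/2\rceil-1$ makes the pattern graph connected. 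Either way this gives a proper $k$-coloring that is already ``halfway nice'': every color class is linked to the rest through long edges, even though $G_{c,q}$ itself need not yet be connected.

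The core of the argument is an extremal/exchange step. Take a proper $k$-coloring $c$ of $G$ minimizing the number of connected components of $G_{c,q}$, and suppose for contradiction that this number is at least $2$. Choose a component $D$; since $G$ is connected there is an edge $uv\in E(G)$ with $u\in V(D)$ and $v\notin V(D)$, and necessarily $\lvert c(u)-c(v)\rvert<q$. I would then modify $c$ \emph{only on $V(D)$} by an affine transformation of the colors it uses, i.e. $c(w)\mapsto c(w)+t$ or $c(w)\mapsto t-c(w)$ for a suitable integer $t$, chosen so that (a) all new colors lie in $\{1,\dots,k\}$, (b) the new coloring is still proper (its colors on $D$ avoid the colors of the neighbours of $D$ outside $D$), and (c) at least one edge between $D$ and $V(G)\setminus V(D)$ acquires color difference $\ge q$. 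Since such a map preserves every color difference between two vertices of $D$, the component $D$ stays connected in the new $q$-subgraph, while the edge from (c) merges $D$ with another component and no edge outside $D$ changes status; hence the number of components strictly decreases, contradicting the choice of $c$.

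The delicate point, and the place where the exact value $\lceil\chi(G)/2\rceil+q$ is forced, is proving that a transformation satisfying (a)--(c) always exists. For this I expect to need to argue that some component $D$ is ``compact'' enough — using few enough colors, equivalently spanning a short enough interval of $\{1,\dots,k\}$ — that there is genuine room to shift or reflect it away from the (few) colors appearing on its boundary; the preliminary ``halfway nice'' coloring above is what lets one start from a controlled configuration, and I anticipate a short case analysis according to whether the binding bound is $k\ge\chi$ or $k\ge\lceil\chi/2\rceil+q$. Guaranteeing that the component count can always be strictly decreased is, I expect, the main obstacle; checking that the chosen exchange never \emph{increases} it is by contrast routine, precisely because it acts on a single component and preserves internal color differences.
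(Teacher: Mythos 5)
Your setup (take an extremal proper $k$-coloring and repair it by local recolorings) matches the spirit of the paper, but the step you defer --- ``a transformation satisfying (a)--(c) always exists'' --- is not a routine verification to be filled in later: it is the entire content of the theorem, and your proposed mechanism for it is genuinely problematic. You recolor a whole component $D$ of $G_{c,q}$ by a shift or reflection, which requires $D$ to be ``compact'' in color space. But any component of $G_{c,q}$ containing at least one edge already has two vertices whose colors differ by at least $q$, so its colors span an interval of length at least $q+1$; in the tight case $k=\lceil\chi(G)/2\rceil+q$ the slack is at most about $\lceil\chi(G)/2\rceil$, and nothing prevents a component of an extremal coloring from using all $k$ colors, in which case no nontrivial in-range shift exists and the only in-range reflection ($t=k+1$) need not be proper across the boundary. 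Minimizing the number of components gives you no control over this, and the ``halfway nice'' initial coloring does not help, because its structure is not an invariant of your exchange steps. So the proposal, as it stands, has a gap exactly at the point you flag as the anticipated main obstacle.

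The paper's proof closes this gap with two ideas absent from your sketch. First, instead of recoloring a component of $G_{c,q}$, it recolors a Kempe chain $H'$ of $G[c_j\cup c_{c(v)}]$ containing a vertex $v$ outside the largest component $H$, where the two colors $j$ and $c(v)$ differ by at least $q$; the key observation is that every edge of such a chain is a $q$-edge, so $H'$ must be disjoint from $H$ (a shortest-path argument inside $H'$), and swapping $j$ and $c(v)$ on $H'$ leaves $H$ untouched while turning a boundary edge $uv$ into a $q$-edge, enlarging $H$. Second, to guarantee a suitable far color $j$ always exists, the initial coloring is chosen to use only the $\chi(G)$ colors $\{1,\dots,\lceil\chi(G)/2\rceil\}\cup\{x+k'+1,\dots,k\}$ (low and high blocks), so that in the residual case where $[c(u)]_q\cup[c(v)]_q=\{1,\dots,k\}$ one of the extreme colors $1$ or $k$ can play the role of $j$. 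Your shift/reflection idea would need a substitute for both of these ingredients; without them the component count cannot be shown to decrease.
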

\begin{proof}
Consider $k=\max\{\chi(G),\lceil \chi(G)/2\rceil +q\}$ and let $c$ be a proper $k$-coloring of $G$ that uses the following $\chi(G)$ colors: $\{1,\ldots,x,x+k'+1,\ldots,k\}$, where $x=\lceil \chi(G)/2\rceil$ and $k'=k-\chi(G)$. Let $H$ be a component of $G_{c,q}$ with maximum number of vertices. Suppose, without loss of generality, that $c$ maximizes the size of $H$. We claim that such a coloring $c$ satisfies that $G_{c,q}$ is connected, which means that $H$ is a spanning subgraph of $G$.

By contradiction, suppose that $V(H)\subset V(G)$, i.e., $H$ does not contain every vertex of $G$. Since $G$ is connected, there must be an edge $uv\in [H,\overline{H}]$. By the definition of $G_{c,q}$, we know that $[c(u)]_q\cap[c(v)]_q\neq \emptyset$.

First, suppose that there exists $j\in \{1,\ldots,k\}\setminus([c(u)]_q\cup [c(v)]_q)$, and let $H'$ be the component of $G[c_j \cup c_{c(v)}]$ containing $v$. We claim that $V(H')\cap V(H)=\emptyset$. Suppose otherwise and let $v'\in V(H')\cap V(H)$ be closest to $v$ in $H'$; also, let $w\in N_{H'}(v')\setminus V(H)$ (it exists by the choice of $v'$). By the definition of $H'$, we know that $\{c(v'),c(w)\} = \{j,c(v)\}$. This contradicts the construction of $H$ since $wv'\notin E_{c,q}$ and $j\notin [c(v)]_q$. Now, let $c'$ be obtained from $c$ by switching colors $j$ and $c(v)$ in $H'$. Because $V(H')\cap V(H)=\emptyset$, nothing changes in $H$; additionally, $c'(v)\notin [c'(u)]_q$, which means that $uv\in E_{c',q}$ and that there is a component in $G_{c',q}$ that strictly contains $H$, a contradiction to the choice of $c$.

Now, suppose that


\begin{equation}
\label{eq:1}
[c(u)]_q\cup [c(v)]_q=\{1,\ldots,k\}, \text{ for all } uv\in [H,\overline{H}].
\end{equation} 
Recall that $c$ uses the colors that are in the set $\{1,\ldots,x,x+k'+1,\ldots,k\}$, where $k=\max\{\chi(G),\lceil \chi(G)/2\rceil +q\}$, $x=\lceil \chi(G)/2\rceil$ and $k'=k-\chi(G)$. We want to prove that $1\notin [i]_q$, for every $i\in\{x+k'+1,\ldots,k\}$, and that $k\notin [i]_q$, for every $i\in\{1,\ldots,x\}$. We analyse the cases below.

\begin{itemize}
 \item $q\ge \lfloor \chi(G)/2\rfloor$: in this case, $k=x+q$. If $i\in\{1,\ldots,x\}$, then $k-i\ge k-x = x+q-x = q$. In case, $i\in\{x+k'+1,\ldots,k\}$, then $i-1 \ge x+k'+1-1 = x+k-\chi(G) = x+x+q-\chi(G)\ge q$.

 \item $q<\lfloor \chi(G)/2\rfloor$: observe that $k=\chi(G)$ and $k'=0$. If $i\in\{1,\ldots,x\}$, then $k-i\ge k-x = 
 \chi(G)-x = \lfloor \chi(G)/2\rfloor > q$. Similarly, if $i\in\{x+k'+1,\ldots,k\}$, then $i-1 \ge x+k'+1-1 = x> q$.
\end{itemize} 
Now, consider any edge $uv\in [H,\overline{H}]$. Suppose that $c(u)\le x$, in which case $k\notin [c(u)]_q$; if this is not the case, we get $1\notin [c(u)]_q$ and the argument is analogous. By Equation~\ref{eq:1}, we get $k\in [c(v)]_q$, and therefore $c(v)\ge x+k'+1$. Let $H'$ be the component of $G[c_k\cup c_{c(v)}]$ containing $v$. We claim that $V(H')\cap V(H)=\emptyset$. Suppose otherwise, and let $v'\in V(H')\cap V(H)$ be the closest to $v$ in $H'$ and let $w\in N_{H'}(v')\setminus V(H)$. By the choice of $H'$, we know that $\{c(v'),c(w)\} = \{c(v),k\}$, in which case $1\notin [c(v')]_q\cup [c(w)]_q$, contradicting Equation~\ref{eq:1}. Finally, the theorem follows by the same argument used on the previous case.
\end{proof}

It remains to prove that this is also a lower bound. Our proof actually holds for any spanning backbone that does not contain isolated vertices.

\begin{lemma}
If $G$ is a graph and $H$ is a spanning subgraph of $G$ such that $\delta(H)\ge 1$, then, for every positive integer $q$ the following holds:
\[BBC_q(G,H)\ge \max\{\chi(G),\left\lceil \frac{\chi(G)}{2}\right\rceil +q \}.\]
\end{lemma}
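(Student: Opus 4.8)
The plan is to prove the two bounds separately. The inequality $BBC_q(G,H)\ge \chi(G)$ is immediate, since any $q$-backbone $k$-coloring of $(G,H)$ is in particular a proper $k$-coloring of $G$. The real content is the bound $BBC_q(G,H)\ge \lceil \chi(G)/2\rceil + q$, and since $\lceil \chi(G)/2\rceil \le k-q$ is equivalent to $\chi(G)\le 2(k-q)$, it suffices to show: if $(G,H)$ admits a $q$-backbone $k$-coloring $c$, then $\chi(G)\le 2(k-q)$. Note first that $\delta(H)\ge 1$ forces $H$ to contain an edge $uv$, and then $\lvert c(u)-c(v)\rvert\ge q$ together with $c(u),c(v)\in\{1,\dots,k\}$ forces $k\ge q+1$; write $p=k-q\ge 1$.

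The main idea is to fold the palette $\{1,\dots,k\}$ onto the two disjoint blocks $\{1,\dots,p\}$ and $\{p+1,\dots,2p\}$ according to the ``direction'' of a backbone neighbor. For each vertex $v$, pick (using $\delta(H)\ge 1$) an $H$-neighbor $w$ with $\lvert c(w)-c(v)\rvert\ge q$: call $v$ \emph{low} if some such $w$ satisfies $c(w)>c(v)$ --- equivalently $c(w)\ge c(v)+q$, which forces $c(v)\le k-q=p$ --- and \emph{high} otherwise, in which case every such $w$ satisfies $c(w)\le c(v)-q$, forcing $c(v)\ge q+1$. Every vertex is of exactly one type. Now define $\phi\colon V(G)\to\{1,\dots,2p\}$ by $\phi(v)=c(v)$ if $v$ is low, and $\phi(v)=p+c(v)-q$ if $v$ is high; by the two inequalities just derived, low vertices land in $\{1,\dots,p\}$ and high vertices in $\{p+1,\dots,2p\}$.

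It remains to check that $\phi$ is a proper coloring of $G$. Take $xy\in E(G)$. If $x$ and $y$ have different types, then $\phi(x)$ and $\phi(y)$ lie in the disjoint blocks $\{1,\dots,p\}$ and $\{p+1,\dots,2p\}$, so $\phi(x)\neq\phi(y)$. If $x$ and $y$ have the same type, then $\phi$ restricted to that type is an injective function of $c$, and $c(x)\neq c(y)$ since $c$ is proper, so again $\phi(x)\neq\phi(y)$. Hence $\chi(G)\le 2p=2(k-q)$, i.e.\ $k\ge \lceil \chi(G)/2\rceil + q$, and together with $BBC_q(G,H)\ge \chi(G)$ this proves the lemma.

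The step I expect to be most delicate --- though it is short --- is the classification of the vertices and, in particular, pinning down exactly where the hypothesis $\delta(H)\ge 1$ is used: it is what ensures that \emph{every} vertex has a far backbone neighbor and hence receives a type, which in turn forces its color into $\{1,\dots,p\}\cup\{q+1,\dots,k\}$. Without it, a vertex isolated in $H$ could be colored in the ``gap'' $\{k-q+1,\dots,q\}$ (nonempty precisely when $k<2q$), and the folding argument would fail.
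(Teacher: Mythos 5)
Your proof is correct and rests on the same key observation as the paper's: since $\delta(H)\ge 1$, every vertex has a backbone neighbour at colour-distance at least $q$, which forces every colour used to lie in $\{1,\ldots,k-q\}\cup\{q+1,\ldots,k\}$ --- a set of $2(k-q)$ colours --- whence $\chi(G)\le 2(k-q)$ and $k\ge\lceil\chi(G)/2\rceil+q$. The paper phrases this as the middle band $\{k-q+1,\ldots,q\}$ being unusable and counts the remaining colours (after a case split on whether $k\ge 2q$), while your explicit low/high folding map $\phi$ is just a relabelling of those same colours; it has the minor virtue of handling all cases uniformly.
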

\begin{proof}
Let $H$ be any spanning subgraph of $G$ with $\delta(H)\ge 1$, and let $k=BBC_q(G,H)$. Furthermore, let $c$ be a $q$-backbone $k$-coloring of $G$. Since any $q$-backbone coloring of $(G,H)$ is also a proper coloring of $G$, we have that $k\ge \chi(G)$. Now, if either $q\le\lfloor \chi(G)/2\rfloor$, or $q\ge\lceil \chi(G)/2\rceil$ and $k\ge 2q$, we are done. So, suppose $q\ge \lceil \chi(G)/2\rceil$ and $k<2q$, and let $k'=2q-k$. We claim that $[i]_q=\{1,\ldots,k\}$, for every $i\in\{q-k'+1,\ldots,q\}$. Because $d_H(u)\ge 1$, we know that none of these $k'$ colors can be used on $u$, for every $u\in V(G)$, and the following holds:

\[k-k'=k-2q+k\ge \chi(G). \]
This inequality implies that:

\[k\ge \left\lceil \frac{\chi(G)}{2}\right\rceil+q.\]
It remains to prove our claim. So, let $i$ be any color in $\{q-k'+1,\ldots,q\}$. It suffices to show that $\{1,k\}\subseteq [i]_q$. Clearly, $1\in [i]_q$, since $i\le q$. Also, since $k=2q-k'$ and $i\ge q-k'+1$, we get $k-i \le 2q-k'-q+k'-1 = q-1$. Thus, $k\in[i]_q$ and the lemma follows.
\end{proof}


\end{document}